\documentclass[reqno]{amsart}
\usepackage{IEEEtrantools}
\topmargin  = 0.0 in
\leftmargin = 0.9 in
\rightmargin = 1.0 in
\evensidemargin = -0.10 in
\oddsidemargin =  0.10 in
\textheight = 8.5 in
\textwidth  = 6.6 in
\setlength{\parskip}{2mm}

\setlength{\parindent}{0mm}

\usepackage{blindtext}
\usepackage{tasks}
\usepackage{graphicx}
\usepackage{mathrsfs}
\usepackage{amsmath,amssymb,mathrsfs,amsfonts,dsfont} 
\usepackage{breqn}
\usepackage{mathtools}
\usepackage{caption}
\usepackage{subcaption}

\usepackage{blindtext}
\usepackage{tasks}
\usepackage{graphicx}
\usepackage{mathrsfs}
\usepackage{amsmath,amssymb,mathrsfs,amsfonts,dsfont} 
\usepackage{breqn}
\usepackage{mathtools}
\usepackage{caption}
\usepackage{subcaption}

\setlength{\parindent}{0mm}

\mathtoolsset{showonlyrefs=true}
\newtheorem{theorem}{Theorem}[section]

\newtheorem{definition}[theorem]{Definition}

\numberwithin{equation}{section}

\newcommand{\modes}{\mathsf{P}}

\newcommand{\Real}{\mathbb{R}}
\newcommand{\Natural}{\mathbb{N}}
\newcommand{\Expectation}[1]{\ensuremath{\mathds{E}[#1]}}

\newcommand{\infgen}[1]{\mathcal{L}#1}
\newcommand{\define}{\coloneqq}
\newcommand{\Falgebra}{\ensuremath{\mathcal{F}}}
\newcommand{\Pprob}{\ensuremath{\mathds{P}}}
\newcommand{\ExtInputMap}{\ensuremath{\mathcal{U}}}
\newcommand{\IntInputMap}{\ensuremath{\mathcal{W}}}

\newcommand{\horizontalEqSep}{\noindent\rule{18cm}{0.4pt}
}
\newcommand{\expo}{\mathsf{e}}
\newcommand{\diff}{\mathsf{d}}

\newcommand{\norm}[1]{\lVert#1\rVert}

\mathtoolsset{showonlyrefs=true}
\newcommand{\PC}{\mathcal{P}}

\newcommand{\ssft}{ \textrm{SSF-}\textrm{M}_{2}}

%


\begin{document}
	\begin{abstract}
		In this work, we derive conditions under which abstractions of networks of stochastic hybrid systems can be constructed compositionally. Proposed conditions leverage the interconnection topology, switching randomly between $\PC$ different interconnection topologies, and the joint dissipativity-type properties of subsystems and their abstractions. The random switching of the interconnection is modelled by a Markov chain. In the proposed framework, the abstraction, itself a stochastic hybrid system (possibly with a lower dimension), can be used as a substitute of the original system in the controller design process.  
		Finally, we provide an example illustrating the effectiveness of the proposed results by designing a controller enforcing some logic properties over the interconnected abstraction and then refining it to the original interconnected system.
	\end{abstract}

	
	\title[Compositional abstractions of networks of stochastic hybrid systems under randomly switched topologies]{Compositional abstractions of networks of stochastic hybrid systems under randomly switched topologies}
	
	\author{Asad Ullah Awan}
	\author{Majid Zamani}
	\address{Department of Electrical and Computer Engineering, Technical University of Munich, D-80290 Munich, Germany.}
	\email{asad.awan@tum.de, zamani@tum.de}
\maketitle

\section{Introduction}

For large interconnected control systems, controller design to achieve some complex specifications in a reliable and cost effective way is a challenging task. One direction which has been explored to overcome this challenge is to use a simpler (e.g. lower dimensional) (in)finite approximation of the given system as a replacement in the controller design process. This allows for an easier design of a controller for the approximation, which can be refined to the one for the original complex system via some interface. The error between the output behaviour of the concrete system and the one of its approximation can be quantified a priori in this detour controller synthesis approach. 

In addition, rather than treating the interconnected system in a monolithic manner, an approach which severely restricts the capability of existing techniques to deal with many number of subsystems, compositional schemes provide network-level certifications from main structural properties of the subsystems and their interconnections. In the past few years, there have been several results on the compositional construction of (in)finite abstractions of deterministic control systems including \cite{pola2016symbolic}, \cite{tazaki2008bisimilar}, \cite{RZ1}, and of a class of stochastic hybrid systems \cite{zamani2015approximations}. These results use small-gain type conditions to enable the compositional construction of abstractions. However, as shown in \cite{das2004some}, this type of conditions is a function of the size of the network and can be violated as the number of subsystems grows.

The recent result in \cite{7857702} provides a compositional framework for the construction of infinite abstractions of networks of control systems using dissipativity theory. In this result a notion of storage function is proposed which describes joint dissipativity properties of control systems and their abstractions. This notion is leveraged to derive compositional conditions under which a network of abstractions approximate a network of the concrete subsystems. Those conditions can be independent of the number of the subsystems under some properties on the interconnection topologies. This approach was recently extended to a class of stochastic hybrid systems in \cite{awan2017compositional}. 

In more realistic scenarios, the interconnection topology of interconnected systems is not fixed due to for example loss of communication between the agents. To accommodate for this scenario, in this work, we deal with networks of stochastic hybrid systems wherein the interconnection topology is randomly switching between $\PC$ different interconnection matrices. We derive conditions under which one can construct an abstraction of a given network of stochastic hybrid systems under randomly switching topology in a compositional way. The random switching here is modeled using a continuous-time Markov chain, with each state of the chain representing a different interconnection topology. 
We illustrate the effectiveness of the approach by synthesizing a controller to enforce a given specification expressed as a linear temporal logic formula over the interconnected abstraction and then refining it to the original interconnected system.

	
\section{Stochastic Hybrid Systems}
\subsection{Notation}
The sets of non-negative integer and real numbers are denoted by $\Natural$ and $\Real$, respectively. Those symbols are subscripted to restrict them in the usual way, e.g. $\Real_{>0}$ denotes the positive real numbers. The symbol $\Real^{n\times m}$ denotes the vector space of real matrices with $n$ rows and $m$ columns. The symbols $\vec{1}_{n}, \vec{0}_n, I_n, 0_{n \times m}$ denote the vector in $\Real^n$ with all its elements to be one, the zero vector, identity, and zero matrices in $\Real^n, \Real^{n \times n}, \Real^{n \times m}$, respectively. For $a,b \in \Real$ with $a \leq b$, the closed, open, and half-open intervals in $\Real$ are denoted by $[a,b], ]a,b[, [a,b[$, and $]a,b]$, respectively. For $a, b \in \Natural$ and $a \leq b$, we use $[a;b], ]a;b[, [a;b[$, and $]a;b]$ to denote the corresponding intervals in $\Natural$. Given $N \in \Natural_{\geq 1}$, vectors $x_i \in \Real^{n_i}, n_i \in \Natural_{\geq 1}$ and $i \in [1;N]$, we use $x = [x_1;\ldots;x_N]$ to denote the concatenated vector in $\Real^n$ with $n=\sum^N_{i=1} n_i$. Similarly, we use $X = [X_1;\ldots;X_N]$ to denote the matrix in $\Real^{n\times m}$ with $n = \sum^N_{i=1} n_i$, given $N \in \Natural_{\geq 1}$, matrices $X_i \in \Real^{n_i \times m}, n_i \in \Natural_{\geq 1}$, and $i \in [1;N]$. Given a vector $x \in \Real^{n}$, we denote by $\norm{x}$ the Euclidean norm of $x$. Given a matrix $M = \{m_{ij}\} \in \Real^{n\times m}$, we denote by $\norm{M}$ the Euclidean norm of $M$. 
Given matrices $M_1,\dots,M_n$, the notation $\mathsf{diag}(M_1,\ldots,M_n)$ represents a block diagonal matrix with diagonal matrix entries $M_1,\ldots,M_n$. 
Given a function $f : \Real_{\geq 0} \rightarrow \Real^n$, the (essential) supremum of $f$ is denoted by $\norm{f}_{\infty} \coloneqq$  (ess)sup$\{\norm{f(t)}, \ t \geq 0\}$. A continuous function $\gamma: \Real_{\geq 0} \rightarrow \Real_{\geq 0}$, is said to belong to class $\mathcal{K}$ if it is strictly increasing and $\gamma(0) = 0$; $\gamma$ is said to belong to $\mathcal{K}_{\infty}$ if $\gamma \in \mathcal{K}$ and $\gamma(r) \rightarrow \infty$ as $r \rightarrow \infty$. A continuous function $\beta: \Real_{\geq 0} \times \Real_{\geq 0} \rightarrow \Real_{\geq 0}$ is said to belong to class $\mathcal{KL}$ if, for each fixed $t$, the map $\beta(r,t)$ belongs to class $\mathcal{K}$ with respect to $r$, and for each fixed non zero $r$, the map $\beta(r,t)$ is decreasing with respect to $t$ and $\beta(r,t) \rightarrow 0$ as $t \rightarrow \infty$. 
\subsection{Stochastic hybrid systems}

Let $(\Omega, \Falgebra, \mathds{P})$ denote a probability space endowed with a filtration $\mathds{F} = (\mathcal{F}_s)_{s\geq0}$ satisfying the usual conditions of completeness and right continuity. The expected value of a measurable function $g(X)$, where $X$ is a random variable defined on a probability space ($\Omega, \Falgebra, \Pprob$), is defined by the Lebesgue integral $\Expectation{g(X)} \define \int_{\Omega} g(X(\omega))\mathds{P}(\diff\omega)$, where $\omega \in \Omega$. Let $(W_s)_{s\geq0}$ be a $\tilde{p}$-dimensional $\mathds{F}$-Brownian motion and $(P_s)_{s\geq 0}$ be a $\tilde{r}$-dimensional $\mathds{F}$-Poisson process. We assume that the Poisson process and Brownian motion are independent of each other. The Poisson process $P_s = [P_s^1;\ldots;P_s^{\tilde{r}} ]$ models $\tilde{r}$ kinds of events whose occurrences are assumed to be independent of each other. 
\begin{definition}  
	\label{def:shs}
	The class of stochastic hybrid systems studied in this paper is a tuple  $$\Sigma = (\Real^n, \Real^m, \Real^p, \mathcal{U}, \mathcal{W}, f, \sigma, \rho,\Real^{q_1}, \Real^{q_2},h_1,h_2),$$ where
	\begin{itemize}
		
		\item $\Real^n$, $\Real^m$, $\Real^p$, $\Real^{q_1}$, and $\Real^{q_2}$ are the state, external input, internal input, external output, and internal output spaces, respectively;
		\item $\ExtInputMap$ and $\IntInputMap$ are subsets of sets of all $\mathds{F}$-progressively measurable processes with values in $\Real^m$ and $\Real^p$, respectively;
		\item $f:\Real^n \times \Real^m \times \Real^p \rightarrow \Real^n $  is the drift term which is globally Lipschitz continuous: there exists Lipschitz constants $L_x, L_u, L_w \in \Real_{\geq 0}$ such that $\norm{f(x,u,w) - f(x',u',w')} \leq L_x\norm{x-x'} + L_u\norm{u - u'} + L_w\norm{w-w'}$ for all $x,x'  \in \Real^n$, all $u,u' \in \Real^m$, and all $w, w'\in \Real^p$; 
		\item $\sigma: \Real^n \rightarrow \Real^{n\times \tilde{p}}$ is the diffusion term which is globally Lipschitz continuous with the Lipschitz constant $L_{\sigma}$;
		\item $\rho: \Real^n \rightarrow \Real^{n \times \tilde{r}}$ is the reset term which is globally Lipschitz continuous with the Lipschitz constant $L_{\rho}$;
		\item $h_1: \Real^n \rightarrow \Real^{q_1}$ is the external output map;
		\item $h_2: \Real^n \rightarrow \Real^{q_2}$ is the internal output map.
	\end{itemize}
\end{definition}
A stochastic hybrid system $\Sigma$ satisfies
\begin{IEEEeqnarray}{c}	
	\label{eq:w}
	\Sigma:\left\{ \begin{IEEEeqnarraybox}[\relax][c]{rCl}
		\diff \xi(t) &=& f(\xi(t), \upsilon(t), \omega(t))\diff t + \sigma(\xi(t))\diff W_t   \\ && +   \rho(\xi(t)) \diff P_t, \\
		\zeta_1(t) &=& h_1(\xi(t)), \\
		\zeta_2(t) &=& h_2(\xi(t)),
		\end{IEEEeqnarraybox}\right.
\end{IEEEeqnarray}
$\Pprob$-almost surely ($\Pprob$-a.s.) for any $\upsilon \in \ExtInputMap$ and any $\omega \in \IntInputMap$, where stochastic process $\xi: \Omega \times \Real_{\geq 0} \rightarrow \Real^n$ is called a {\it solution process} of $\Sigma$, stochastic process $\zeta_1 : \Omega \times \Real_{\geq 0} \rightarrow \Real^{q_1}$ is called an external output trajectory of $\Sigma$, and  stochastic process $\zeta_2: \Omega \times \Real_{\geq 0} \rightarrow \Real^{q_2}$ is called an internal output trajectory of $\Sigma$. 
We also write $\xi_{a\upsilon\omega}(t)$ to denote the value of the solution process at time $t \in \Real_{\geq 0}$ under the input trajectories $\upsilon$ and $\omega$ from initial condition $\xi_{a\upsilon\omega}(0) = a$ $\Pprob$-a.s., where $a$ is a random variable that is $\mathcal{F}_0$-measurable. We denote by $\zeta_{1_{a\upsilon\omega}}$ and $\zeta_{2_{a\upsilon\omega}}$ the external and internal output trajectories corresponding to the solution process $\xi_{a\upsilon\omega}$. Here, we assume that the Poisson processes $P^i_s$, for any $i \in [1;\tilde{r}]$, have the rates $\lambda_i$. We emphasize that the postulated assumptions on $f, \sigma$, and $\rho$ ensure existence, uniqueness, and strong Markov property of the solution process \cite{oksendal2005applied}. 

We now introduce the definition of a continuous-time Markov chain, used later to model the switching between interconnection topologies.
\begin{definition}
	\label{def:mc}
	A continuous-time Markov chain is a tuple $\Pi = (\modes, \mathsf{Q})$, 
	where 
	\begin{itemize}
		\item $\modes$ is a finite set of cardinality $\PC$, called the state-space of the Markov-chain;
		\item $\mathsf{Q} = \{q_{ij}\} \in \Real^{\PC \times \PC }$ is called the generator matrix.
	\end{itemize}
Associated with $\Pi$ is a stochastic process $\widehat{\pi}: \Omega \times \Real_{\geq 0} \rightarrow \modes$, on the probability space  $(\Omega, \Falgebra, \mathds{P})$, such that for every fixed $\omega \in \Omega$, $\pi(\cdot) = \widehat{\pi}(\omega,\cdot): \Real_{\geq 0} \rightarrow \modes$. For any $i,j \in \modes$ and $t \in \Real_{\geq 0},$ one has
	\begin{align}
	\mathds{P}(\pi(t + h) = j \vert \pi(t) = i) = \begin{cases}
	q_{ij}h + o(h), \ \ \ \ \ \ i \neq j, \\
	1 + q_{ii}h + o(h), \ i = j,
	\end{cases} 
	\end{align} 
	where $h > 0$, $\lim_{h\to \infty}\frac{o(h)}{h} = 0$, $q_{ii} = -\sum_{i\neq j} q_{ij}$, and $q_{ij} \geq 0$ is called the transition jump rate from $i$ to $j$ if $i \neq j$. We denote the value of the solution process at time $t \in \Real_{\geq 0}$ by $\pi(t)$, and refer to it as the {\it switching process}.
\end{definition}
We now introduce the definition of a switching stochastic hybrid system, which will be used later to denote interconnected stochastic hybrid systems with randomly switching topologies. 
\begin{definition}
	A switching stochastic hybrid system is a tuple  $\Sigma = (\Real^n, \Real^m, \mathcal{U}, \modes,\mathscr{P}, f, \sigma, \rho, \Real^q, h)$ where
		\begin{itemize}
		
		\item $\Real^n$, $\Real^m$, and $\Real^{q}$,  are the state, external input and external output spaces, respectively;
		\item $\ExtInputMap$ is a subset of the set of all $\mathds{F}$-progressively measurable processes with values in $\Real^m$;
		\item $\modes = \{1,\dots,
		\PC \}$ is a finite set of modes;
		\item $\mathscr{P}$ is a subset of the set of all piecewise constant c$\grave{\text{a}}$dl$\grave{\text{a}}$g (i.e. right continuous and with left limits) functions of time from $\Real_{\geq 0}$ to $\modes$ and characterized by a finite number of discontinuities on every bounded interval in $\Real_{\geq 0}$ (no Zeno behaviour);
		\item $f : \Real^n \times \Real^m \times 
		\modes \rightarrow \Real^n$, is a drift term such that for every fixed $\mathsf{p} \in \modes$, $f(.,.,\mathsf{p})$ is globally Lipschitz continuous: there exists Lipschitz constants $L_x, L_u \in \Real_{\geq 0}$ such that $\norm{f(x,u,\mathsf{p}) - f(x',u',\mathsf{p})} \leq L_x\norm{x-x'} + L_u\norm{u - u'}$ for all $x,x'  \in \Real^n$ and all $u,u' \in \Real^m$; 
		\item $\sigma: \Real^n \rightarrow \Real^{n\times \tilde{p}}$ is the diffusion term which is globally Lipschitz continuous with the Lipschitz constant $L_{\sigma}$;
		\item $\rho: \Real^n \rightarrow \Real^{n \times \tilde{r}}$ is the reset term which is globally Lipschitz continuous with the Lipschitz constant $L_{\rho}$;
		\item $h: \Real^n \rightarrow \Real^{q}$ is the external output map;
		\end{itemize}
	The stochastic process $\xi: \Omega \times \Real_{\geq 0} \rightarrow \Real^n$ is a solution process of $\Sigma$ if there exists $\upsilon \in \ExtInputMap$ and $\pi \in \mathscr{P}$ satisfying
		\begin{IEEEeqnarray}{c}	
		\label{eq:sdeSW}
			\Sigma:\left\{ \begin{IEEEeqnarraybox}[\relax][c]{rCl}
			\diff \xi(t) &=& f(\xi(t), \upsilon(t), \pi(t))\diff t + \sigma(\xi(t))\diff W_t \nonumber \\ && +   \rho(\xi(t)) \diff P_t \\
			\zeta(t) &=& h(\xi(t)),
			\end{IEEEeqnarraybox}\right.
		\end{IEEEeqnarray}
		$\Pprob$-almost surely ($\Pprob$-a.s.) at each time $t \in \Real_{\geq 0}$. We denote by $\xi_{a\upsilon}(t, \pi(t))$ the value of the solution process at time $t \in \Real_{\geq 0}$ under the control input $\upsilon \in \ExtInputMap$ and the switching process $\pi \in \mathscr{P}$, starting from an  initial condition $\xi_{a\upsilon}(0, \pi(0)) = a$, $\Pprob$-a.s., where $a$ is a random variable that is $\mathcal{F}_0$-measurable.
\end{definition}

\section{Stochastic Storage Function}
In this section, we introduce a notion of stochastic storage function, adapted from the notion of storage functions from dissipativity theory \cite{arcak2016networks}.
\begin{definition}
	Let $$\Sigma = (\Real^n, \Real^m, \Real^p, \mathcal{U}, \mathcal{W}, f, \sigma, \rho, \Real^{q_1}, \Real^{q_2}, h_1,h_2),$$ and $$\hat{\Sigma} = (\Real^{\hat{n}}, \Real^{\hat{m}}, \Real^{\hat{p}}, \mathcal{\hat{U}}, \mathcal{\hat{W}}, \hat{f}, \hat{\sigma}, \hat{\rho}, \Real^{q_1}, \Real^{\hat{q}_2}, \hat{h}_1,\hat{h}_2),$$ be two stochastic hybrid systems with the same external output space dimension, and with solution processes $\xi$ and $\hat{\xi}$, respectively. A twice continuously differentiable function $\mathcal{S}:\Real^n \times \Real^{\hat{n}} \rightarrow \Real_{\geq 0}$ is called a stochastic storage function from $\hat{\Sigma}$ to $\Sigma$ in the second moment (SStF-M$_2$), if it has polynomial growth rate and if there exists convex function $\alpha \in \mathcal{K}_{\infty}$, concave function $\psi_{\mathrm{ext}} \in \mathcal{K}_{\infty} \cup \{0\} $, some positive constant $\kappa$, some matrices $W, \hat{W},$ and $H$ of appropriate dimensions, and some symmetric matrix $X$ of appropriate dimension with conformal block partitions $X^{ij}, i,j \in [1;2]$, such that $\forall x \in \Real^n$ and $\forall\hat{x} \in \Real^{\hat{n}}$, one has 
	\begin{align}
	\label{in:defV}
	\alpha(\norm{h_1(x) - \hat{h}_1(\hat{x})}^2) \leq \mathcal{S}(x,\hat{x}),
	\end{align}
	and $\forall x \in \Real^n$ $\forall \hat{x} \in \Real^{\hat{n}}$ $\forall \hat{u} \in \Real^{\hat{m}} \ \exists u \in \Real^m$ $\forall \hat{w} \in \Real^{\hat{p}} \ \forall w \in \Real^p$, one obtains
	\begin{align}
	\label{ineq:defDiss}
	& \infgen \mathcal{S}(x,\hat{x}) \leq-\kappa \mathcal{S}(x,\hat{x}) + \psi_{\mathrm{ext}}(\norm{\hat{u}}^2) \nonumber \\ & +  \begin{bmatrix} Ww - \hat{W}\hat{w} \\ h_2(x) - H\hat{h}_2(\hat{x})\end{bmatrix}^T \begin{bmatrix} X^{11} & X^{12} \\ X^{21} & X^{22}\end{bmatrix} \begin{bmatrix} Ww - \hat{W}\hat{w} \\ h_2(x) - H\hat{h}_2(\hat{x})\end{bmatrix},
	\end{align}
	\normalsize
	where $\infgen$ denotes the infinitesimal generator of the stochastic process $\Xi = [\xi;\hat{\xi}]$, acting on the function $\mathcal{S}$ (see \cite{oksendal2005applied} for a definition).
\end{definition}
The stochastic hybrid system $\hat{\Sigma}$ (possibly with $\hat{n} < n$) is called an abstraction of $\Sigma$. 

\begin{definition}
	Let $\Pi = (\modes, \mathsf{Q})$ be a continuous-time Markov chain with switching process $\pi$. Let  $\Sigma = (\Real^n, \Real^m, \mathcal{U}, \modes, \pi, f, \sigma, \rho, \Real^q, h)$ and $\hat{\Sigma} = (\Real^{\hat{n}}, \Real^{\hat{m}}, \mathcal{\hat{U}}, \modes, \pi, \hat{f}, \hat{\sigma}, \hat{\rho}, \Real^{q}, \hat{h}$) be two switching stochastic hybrid systems.  
	 A twice continuously differentiable function $V: \Real^n \times \Real^{\hat{n}} \times \modes \rightarrow \Real_{\geq 0}$ is called a stochastic simulation function in the second moment  (SSF-M$_2$), from $\hat{\Sigma}$ to $\Sigma$ if there exists convex function $\alpha \in \mathcal{K}_{\infty}$, concave function $\psi_{\mathrm{ext}} \in \mathcal{K}_{\infty} \cup \{0\}$, and positive constant $\kappa$, such that for any $x \in \Real^n$, any $\hat{x} \in \Real^{\hat{n}}$, and any $j \in \modes$, one has
	\begin{align}
	\label{ineq:simfunction1}
	\alpha(\norm{h(x) - \hat{h}(\hat{x})}^2) \leq V(x,\hat{x},j),
	\end{align}
	and $\forall j \in \modes$ $\forall x \in \Real^n$ $\forall \hat{x} \in \Real^{\hat{n}}$ $\forall \hat{u} \in \Real^{\hat{m}} \ \exists u \in \Real^m$ such that
	\begin{align}
	\label{ineq:simfunction2}
	\infgen V(x,\hat{x},j) \leq -\kappa V(x,\hat{x},j) + \psi_{\mathrm{ext}}(\norm{\hat{u}}^2).
	\end{align}
\end{definition}
We say that a switching stochastic hybrid  system $\hat{\Sigma}$ is approximately simulated by a switching stochastic hybrid system $\Sigma$ 
if there exists an SSF-M$_2$ $V$ from $\hat{\Sigma}$ to $\Sigma$. We call $\hat{\Sigma}$ (possibly with $\hat n < n$) an abstraction of $\Sigma$.

The next theorem shows the important of the existence of an SSF-M$_2$ by quantifying the error between the output behaviours of $\Sigma$ and those of its abstraction $\hat{\Sigma}$. 
\begin{theorem}
 Let $\Pi = (\modes, \mathsf{Q})$ be a continuous-time Markov chain with switching process $\pi$. Let us consider two switching stochastic hybrid systems $\Sigma = (\Real^n, \Real^m, \mathcal{U}, \modes, \pi,f, \sigma, \rho, \Real^q, h)$ and $\hat{\Sigma} = (\Real^{\hat{n}}, \Real^{\hat{m}}, \mathcal{\hat{U}}, \modes,\pi, \hat{f}, \hat{\sigma}, \hat{\rho}, \Real^{q}, \hat{h})$. 
  Suppose $V$ is an SSF-M$_2$ from $\hat{\Sigma}$ to $\Sigma$. Then, there exists a $\mathcal{KL}$ function $\beta$ and a function $\gamma_{\mathrm{ext}}\in\mathcal{K}_{\infty} \cup \{0\}$ such that for any random variables $a$ and $\hat{a}$ that are $\mathcal{F}_0$-measurable, and for any $\hat{\upsilon}\in\hat{\ExtInputMap}$  there exists $\upsilon \in \ExtInputMap$ such that the following inequality holds for any $t \in \Real_{\geq 0}$:
	\begin{align}
	\label{eq:bound_output}
	\Expectation{\norm{\zeta_{a\upsilon}(t) - \hat{\zeta}_{\hat{a}\hat{\upsilon}}(t)}^2} &\leq \beta(\Expectation{V(a,\hat{a},\pi(0))},t)\nonumber\\&\quad + \gamma_{\mathrm{ext}}(\Expectation{\norm{\hat{\upsilon}}^2_{\infty}}).
	\end{align}
\end{theorem}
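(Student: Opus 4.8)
The plan is to follow the standard dissipativity/simulation-function route: use inequality \eqref{ineq:simfunction2} to synthesize the input $\upsilon$ out of $\hat\upsilon$, propagate the resulting exponential-type bound on $V$ along the joint trajectory via Dynkin's formula, and finally transfer the estimate to the output error using \eqref{ineq:simfunction1} together with the convexity of $\alpha$.

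First I would build the input. For every mode $j\in\modes$ and every triple $(x,\hat x,\hat u)$, inequality \eqref{ineq:simfunction2} guarantees some $u\in\Real^m$ with $\infgen V(x,\hat x,j)\le-\kappa V(x,\hat x,j)+\psi_{\mathrm{ext}}(\norm{\hat u}^2)$. Fix a measurable selector $u=\nu(x,\hat x,\hat u,j)$ realizing this (a standard measurable-selection argument), and for a given $\hat\upsilon\in\hat{\ExtInputMap}$ and a realization of the switching process $\pi$ set $\upsilon(t)\define\nu(\xi(t),\hat\xi(t),\hat\upsilon(t),\pi(t))$. Progressive measurability of $\xi,\hat\xi,\hat\upsilon$ and $\pi$ then yields $\upsilon\in\ExtInputMap$, and with this choice the joint process $\Xi=[\xi;\hat\xi]$ together with $\pi$ satisfies $\infgen V(\xi(t),\hat\xi(t),\pi(t))\le-\kappa V(\xi(t),\hat\xi(t),\pi(t))+\psi_{\mathrm{ext}}(\norm{\hat\upsilon(t)}^2)$ $\Pprob$-a.s. for every $t\ge0$.

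Next, writing $v(t)\define\Expectation{V(\xi(t),\hat\xi(t),\pi(t))}$ and applying Dynkin's formula to $V$ along the switched jump-diffusion $(\Xi,\pi)$, I would obtain
\begin{align*}
v(t)\le v(0)-\kappa\int_0^t v(s)\,\diff s+\int_0^t\Expectation{\psi_{\mathrm{ext}}(\norm{\hat\upsilon(s)}^2)}\,\diff s.
\end{align*}
Since $\psi_{\mathrm{ext}}$ is concave and nondecreasing, Jensen's inequality gives $\Expectation{\psi_{\mathrm{ext}}(\norm{\hat\upsilon(s)}^2)}\le\psi_{\mathrm{ext}}(\Expectation{\norm{\hat\upsilon}_\infty^2})$, so a Grönwall/comparison argument yields
\begin{align*}
v(t)\le\expo^{-\kappa t}v(0)+\tfrac1\kappa\,\psi_{\mathrm{ext}}(\Expectation{\norm{\hat\upsilon}_\infty^2}).
\end{align*}
Finally, \eqref{ineq:simfunction1} and convexity of $\alpha$ (Jensen again) give $\alpha(\Expectation{\norm{h(\xi(t))-\hat h(\hat\xi(t))}^2})\le\Expectation{\alpha(\norm{h(\xi(t))-\hat h(\hat\xi(t))}^2)}\le v(t)$, whence $\Expectation{\norm{\zeta_{a\upsilon}(t)-\hat\zeta_{\hat a\hat\upsilon}(t)}^2}\le\alpha^{-1}(v(t))$; as $\alpha\in\mathcal{K}_\infty$ is convex with $\alpha(0)=0$, $\alpha^{-1}$ is concave with $\alpha^{-1}(0)=0$ and hence subadditive, so
\begin{align*}
\Expectation{\norm{\zeta_{a\upsilon}(t)-\hat\zeta_{\hat a\hat\upsilon}(t)}^2}\le\alpha^{-1}\!\big(\expo^{-\kappa t}v(0)\big)+\alpha^{-1}\!\Big(\tfrac1\kappa\,\psi_{\mathrm{ext}}(\Expectation{\norm{\hat\upsilon}_\infty^2})\Big),
\end{align*}
which is \eqref{eq:bound_output} with $\beta(r,t)\define\alpha^{-1}(\expo^{-\kappa t}r)\in\mathcal{KL}$ and $\gamma_{\mathrm{ext}}(r)\define\alpha^{-1}(\tfrac1\kappa\psi_{\mathrm{ext}}(r))\in\mathcal{K}_\infty\cup\{0\}$.

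I expect the main obstacle to be the two measure-theoretic points: choosing the input selector so that $\upsilon\in\ExtInputMap$, and rigorously justifying Dynkin's formula for $V$ along the switched jump-diffusion, i.e.\ checking the integrability/true-martingale conditions needed to exchange expectation with the infinitesimal generator. This is precisely where the polynomial-growth requirement on $V$, the Lipschitz/linear-growth assumptions on $f,\sigma,\rho$, the boundedness of the generator $\mathsf{Q}$, and the no-Zeno property of $\mathscr{P}$ are used. Everything after that (Jensen, Grönwall, the $\alpha^{-1}$ subadditivity bookkeeping) is routine.
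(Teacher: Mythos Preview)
The paper does not actually prove this theorem: its entire proof reads ``The proof is similar to that of Theorem~3.29 in \cite{chatterjee2007studies} and is omitted.'' Your argument is correct and is precisely the standard Dynkin/Gr\"onwall/Jensen route that this referenced result follows, so your proposal and the (cited) proof coincide in approach.
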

\begin{proof}
	The proof is similar to that of Theorem 3.29 in \cite{chatterjee2007studies} and is omitted. 
\end{proof}
\section{Compositionality Result}
First, we introduce interconnected stochastic hybrid systems.
\begin{definition}
Consider $N \in \Natural_{\geq1}$ stochastic hybrid subsystems $$\Sigma_i = (\Real^{n_i}, \Real^{m_i}, \Real^{p_i}, \mathcal{U}_i, \mathcal{W}_i, f_i, \sigma_i, \rho_i, \Real^{q_{1i}}, \Real^{q_{2i}}, h_{1i},h_{2i}),$$ where $i \in [1;N]$. Consider a continuous-time Markov chain $\Pi= (\modes,\mathsf{Q})$, as in Definition \ref{def:mc}, with $\modes = \{1,\dots,\PC\}$, and switching process $\pi$. Consider a set of interconnection matrices $M = \{M_1,\dots, M_{\PC}\}$, where each matrix $M_i$, $i \in [1;\PC]$, defines the coupling of these subsystems. The interconnected switching stochastic hybrid system $$\Sigma = (\Real^{n}, \Real^{m},\mathcal{U}, \modes, \pi, f, \sigma, \rho, \Real^{q}, h),$$ denoted by $\mathcal{I}_{\pi}^M(\Sigma_1,\dots,\Sigma_N)$, follows by $n = \sum^N_{i=1} n_i, m = \sum^{N}_{i=1} m_i, q = \sum^{N}_{i=1}q_{1i}$, and the functions 
\begin{align}
f(x,u,\pi(t)) &\define [f_1 (x_1,u_1,w_1);\dots;f_N (x_N,u_N,w_N)], \\
\sigma(x) &\define [\sigma_1(x_1);\dots;\sigma_N(x_N)], \\
\rho(x) &\define [\rho_1(x_1);\dots;\rho_N(x_N)], \\
h(x) &\define [h_{11}(x_1);\dots;h_{1N} (x_N)],
\end{align}
where $u = [u_1;\dots;u_N]$, $x=[x_1;\dots;x_N]$ and at any time $t \in \Real_{\geq 0}$, the internal variables are constrained by 
\begin{align}
[w_1;\dots;w_N] = M_{\pi(t)}[h_{21}(x_1);\dots;h_{2N}(x_N)].
\end{align} 
\end{definition}
The next theorem provides a compositional approach on the construction of abstractions of networks of stochastic hybrid systems under randomly switching interconnection topology.

\begin{theorem}
Consider an interconnected switching stochastic hybrid system $\Sigma = \mathcal{I}_{
\pi}^M(\Sigma_1,\dots,\Sigma_N)$ induced by $N \in \Natural_{\geq 1}$ stochastic hybrid subsystems $\Sigma_i$, a set of the interconnection matrices  $M = \{M_1, \dots, M_{\PC}\}$, and a continuous-time Markov chain $\Pi = (\modes, \mathsf{Q})$ governing the switching of the interconnection topology with associated stochastic process $\pi$. More specifically, the interconnection topology at any time $t \in \Real_{\geq 0}$ is $M_{\pi(t)}$. Suppose each subsystem $\Sigma_i$ admits an abstraction $\hat{\Sigma}_i$ with the corresponding SStF-M$_2$ $\mathcal{S}_i$. If there exists a finite set of matrices $\hat{M} = \{\hat{M}_1,\dots,\hat{M}_{\PC}\}$ of appropriate dimension such that for each $j \in [1;\PC]$ the matrix (in)equalities
\begin{align} 
\label{eq:interconnected1}
\begin{bmatrix}WM_j \\ I_{\tilde{q}} \end{bmatrix}^T X(\mu_1^jX_1,\dots,\mu_N^j X_N) \begin{bmatrix}WM_j \\ I_{\tilde{q}} \end{bmatrix}    \preceq 0,  
\end{align} 
\begin{align} 
\label{eq:interconnected2}
WM_jH = \hat{W}\hat{M}_j,
\end{align}
are satisfied for some $\mu_i^j > 0$, $i \in [1;N]$, where  $\tilde{q} = \sum^N_{i=1}q_{2i}$ and
\begin{align}
W &= \mathsf{diag}(W_1,\ldots,W_N), 
\hat{W} = \mathsf{diag}(\hat{W}_1,\ldots,\hat{W}_N), \nonumber\\
H &= \mathsf{diag}(H_1,\ldots,H_N),
\end{align}
\begin{align}
\label{eq:X}
&X(\mu_1^j X_1,\dots,\mu_N^jX_N) =\nonumber \\
&\begin{bmatrix}
\mu_1^jX_1^{11} & & & \mu_1^j X_1^{12} & & \\
& \ddots & & & \ddots & \\
& & \mu_N^j X_N^{11} & & & \mu_N^jX_N^{12} \\
\mu_1^jX_1^{21} & & & \mu_1^j X_1^{22} & & \\
& \ddots & & & \ddots & \\
& & \mu_N^j X_N^{21} & & & \mu_N^jX_N^{22}
\end{bmatrix},
\end{align}
\normalsize
then $$V(x,\hat{x},j) \define \sum^{N}_{i=1}\mu_i^j \mathcal{S}_i(x_i,\hat{x}_i),$$ is an SSF-M$_2$ from the interconnected switching stochastic hybrid system $\hat{\Sigma} \define \mathcal{I}_{\pi}^{\hat{M}}(\hat{\Sigma_1},\dots,\hat{\Sigma}_N)$, with the interconnection matrix at time $t \in \Real_{\geq 0}$ given by $\hat{M}_{\pi(t)}$, to $\Sigma$.
\end{theorem}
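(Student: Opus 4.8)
The plan is to verify the two defining conditions of an SSF-M$_2$ for the candidate $V(x,\hat x,j)=\sum_{i=1}^N\mu_i^j\mathcal S_i(x_i,\hat x_i)$ directly from the corresponding properties of the subsystem storage functions $\mathcal S_i$. First I would check the lower-bound condition \eqref{ineq:simfunction1}. Since each $\mathcal S_i$ satisfies $\alpha_i(\norm{h_{1i}(x_i)-\hat h_{1i}(\hat x_i)}^2)\le\mathcal S_i(x_i,\hat x_i)$, and $\norm{h(x)-\hat h(\hat x)}^2=\sum_{i=1}^N\norm{h_{1i}(x_i)-\hat h_{1i}(\hat x_i)}^2$, one bounds this sum by $\max_i\norm{h_{1i}(x_i)-\hat h_{1i}(\hat x_i)}^2$ times $N$, invokes monotonicity of each $\alpha_i$ and the fact that $\mu_i^j>0$, and assembles a single $\mathcal K_\infty$ function $\alpha$ (e.g. $\alpha(r)\defeq\min_i\mu_i^j\alpha_i(r/N)$, convexified if necessary) so that \eqref{ineq:simfunction1} holds for every mode $j$. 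This step is routine.

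The substantive step is the dissipativity inequality \eqref{ineq:simfunction2}. Fix a mode $j=\pi(t)$. Because the solution process of the interconnected system has block-diagonal drift, diffusion and reset, the infinitesimal generator of $\Xi=[\xi;\hat\xi]$ acting on $V(\cdot,\cdot,j)$ splits as $\sum_{i=1}^N\mu_i^j\,\infgen\mathcal S_i(x_i,\hat x_i)$ \emph{plus} the jump term coming from the Markov chain, namely $\sum_{k\in\modes}q_{jk}\big(V(x,\hat x,k)-V(x,\hat x,j)\big)=\sum_{k\in\modes}q_{jk}\sum_{i=1}^N(\mu_i^k-\mu_i^j)\mathcal S_i(x_i,\hat x_i)$; here I would either assume the weights $\mu_i^j$ are chosen mode-independently, or absorb/argue this term away — this is a point I would need to handle carefully (see below). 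For the first part, I apply \eqref{ineq:defDiss} to each $\mathcal S_i$ with the internal input matching dictated by the interconnection: for the abstractions choose $\hat w_i$ so that $[\hat w_1;\dots;\hat w_N]=\hat M_j[\hat h_{21}(\hat x_1);\dots;\hat h_{2N}(\hat x_N)]$, and the existential $u_i$ granted by each SStF-M$_2$ yields the required $u=[u_1;\dots;u_N]$. Summing the $N$ inequalities with weights $\mu_i^j$ gives a bound of the form $-\big(\min_i\kappa_i\big)V(x,\hat x,j)+\big(\sum_i\mu_i^j\psi_{\mathrm{ext},i}(\norm{\hat u_i}^2)\big)$ plus the quadratic coupling term, which after stacking can be written compactly as
\begin{align}
\label{eq:stacked}
\begin{bmatrix} Ww-\hat W\hat w \\ h_2(x)-H\hat h_2(\hat x)\end{bmatrix}^T X(\mu_1^jX_1,\dots,\mu_N^jX_N)\begin{bmatrix} Ww-\hat W\hat w \\ h_2(x)-H\hat h_2(\hat x)\end{bmatrix},
\end{align}
where $w=M_j[h_{21}(x_1);\dots;h_{2N}(x_N)]$ and $\hat w=\hat M_j[\hat h_{21}(\hat x_1);\dots;\hat h_{2N}(\hat x_N)]$.

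The crux is then to show \eqref{eq:stacked} is $\le 0$. Using the interconnection constraint and \eqref{eq:interconnected2}, I would show that the stacked vector $\big[Ww-\hat W\hat w;\ h_2(x)-H\hat h_2(\hat x)\big]$ equals $\begin{bmatrix}WM_j\\ I_{\tilde q}\end{bmatrix}\big(h_2(x)-H\hat h_2(\hat x)\big)$: indeed $Ww-\hat W\hat w=WM_j h_2(x)-\hat W\hat M_j\hat h_2(\hat x)=WM_jh_2(x)-WM_jH\hat h_2(\hat x)=WM_j(h_2(x)-H\hat h_2(\hat x))$ by \eqref{eq:interconnected2}, while the second block is $h_2(x)-H\hat h_2(\hat x)$ by definition. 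Substituting and applying \eqref{eq:interconnected1} makes \eqref{eq:stacked} nonpositive. Finally, collecting terms, $\infgen V(x,\hat x,j)\le-\kappa V(x,\hat x,j)+\psi_{\mathrm{ext}}(\norm{\hat u}^2)$ with $\kappa\defeq\min_i\kappa_i$ and $\psi_{\mathrm{ext}}$ an appropriate concave $\mathcal K_\infty\cup\{0\}$ upper bound on $\sum_i\mu_i^j\psi_{\mathrm{ext},i}(\norm{\hat u_i}^2)$ in terms of $\norm{\hat u}^2$ (using concavity and $\norm{\hat u_i}\le\norm{\hat u}$). The main obstacle I anticipate is the Markov-jump contribution to the generator: unless the weights $\mu_i^j$ are taken independent of $j$ (so that term vanishes), one must impose an extra condition ensuring $\sum_{k}q_{jk}\sum_i(\mu_i^k-\mu_i^j)\mathcal S_i\le 0$, or fold it into $\kappa$; I would state explicitly which choice the theorem intends and make sure the resulting $\kappa>0$.
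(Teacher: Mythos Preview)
Your approach is essentially the paper's: verify \eqref{ineq:simfunction1} from the per-subsystem bounds, then sum the $N$ storage inequalities with weights $\mu_i^j$, stack into the quadratic form, use \eqref{eq:interconnected2} to factor $Ww-\hat W\hat w=WM_j\big(h_2(x)-H\hat h_2(\hat x)\big)$, and apply \eqref{eq:interconnected1}; $\kappa=\min_i\kappa_i$ and $\psi_{\mathrm{ext}}$ is built by a mode-wise concave envelope exactly as you indicate. The only cosmetic difference is that the paper constructs $\alpha$ via the optimization $\underline{\alpha}_j(s)=\max\{\sum_i\alpha_i^{-1}(s_i):\mu_j^T\vec s=s,\ \vec s\ge0\}$ and then $\alpha=\underline{\alpha}^{-1}$, rather than your $\min_i\mu_i^j\alpha_i(r/N)$ bound.

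Regarding the Markov-jump contribution you flag: the paper simply writes $\infgen V(x,\hat x,j)=\sum_i\mu_i^j\,\infgen\mathcal S_i(x_i,\hat x_i)$ and proceeds, i.e.\ it does not include or otherwise bound the term $\sum_{k}q_{jk}\sum_i(\mu_i^k-\mu_i^j)\mathcal S_i$. So your concern is not resolved by a different mechanism in the paper---it is tacitly suppressed. If you want the argument to be airtight, the cleanest fix is exactly the one you name: take the weights $\mu_i^j$ mode-independent (as the paper in fact does in its example, $\mu_i^j\equiv1$), which makes that term vanish identically; otherwise an additional LMI of the form $\sum_k q_{jk}\mu_i^k\le\delta\,\mu_i^j$ for all $i,j$ with $\delta<\min_i\kappa_i$ would be needed to absorb it into $\kappa$.
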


\begin{proof}
The proof is inspired by that of Theorem 4.2 in \cite{7857702}. First we show that inequality (\ref{ineq:simfunction1}) holds for some convex $\mathcal{K}_{\infty}$ function $\alpha$. For any $x = [x_1;\dots;x_N] \in \Real^n$, $\hat{x} = [\hat{x}_1;\dots;\hat{x}_N] \in \Real^{\hat{n}}$, and $j\in \modes$, one gets:
\begin{align*}
\norm{h(x) - \hat{h}(\hat{x})}^2 &\leq  \sum^{N}_{i=1}\norm{h_{1i}(x_i) - \hat{h}_{1i}(\hat{x}_i)}^2 \nonumber \\ 
&\leq\sum^{N}_{i=1}\alpha^{-1}_i(\mathcal{S}_i(x_i,\hat{x}_i)) \leq \underline{\alpha}_j(V(x,\hat{x},j)), \nonumber
\end{align*}
where $\underline{\alpha}_j$ is a $\mathcal{K}_{\infty}$ function defined as 
\begin{align}
\underline{\alpha}_j(s) \define \begin{cases}   \max\limits_{\vec{s}\geq 0} & \sum^N_{i=1}\alpha^{-1}_i(s_i)\\ \mbox{s.t.} & \mu_j^T\vec{s} = s,\end{cases}
\end{align}
where $\vec{s} = [s_1;\dots;s_N] \in \Real^N$ and $\mu_j = [\mu_1^j;\dots;\mu_N^j]$. Since $\underline{\alpha}_j\in\mathcal{K}_\infty$ are concave functions as argued in \cite{zamani2015approximations}, there exists a concave function $\underline{\alpha}\in\mathcal{K}_\infty$ such that $\underline{\alpha}_j\leq\underline{\alpha}$ $\forall j\in \modes$.
\begin{figure*}
\small
\begin{align}
\label{ineq:inf_storage}
&\infgen V(x,\hat{x},j) = \sum^N_{i=1}\mu_i^j\infgen \mathcal{S}_i(x_i,\hat{x}_i) \nonumber
\\ &\leq \sum^N_{i=1} \mu_i^j \Bigg(-\kappa_i \mathcal{S}_i(x_i,\hat{x}_i)	 + \psi_{i\mathrm{ext}}(\norm{\hat{u}_i}^2) \nonumber  +  \begin{bmatrix} W_i w_i - \hat{W}_i\hat{w_i} \\ h_{2i}(x_i) - H_i\hat{h}_{2i}(\hat{x_i})\end{bmatrix}^T \begin{bmatrix} X^{11}_i & X^{12}_i \\ X^{21}_i & X^{22}_i\end{bmatrix} \begin{bmatrix} W_i w_i - \hat{W}_i\hat{w}_i \\ h_{2i}(x_i) - H_i\hat{h}_{2i}(\hat{x}_i)\end{bmatrix}\Bigg)\nonumber  \\
&\leq  \sum^N_{i=1} -\mu_i^j \kappa_i \mathcal{S}_i(x_i,\hat{x}_i) + \sum^N_{i=1}\mu_i^j\psi_{i\mathrm{ext}}(\norm{\hat{u}_i}^2)\nonumber +  \begin{bmatrix} W\begin{bmatrix}w_1 \\ \vdots \\ w_N \end{bmatrix} - \hat{W}\begin{bmatrix}\hat{w}_1 \\ \vdots \\ \hat{w}_N \end{bmatrix} \\ h_{1}(x_1) - H_1\hat{h}_{21}(\hat{x}_1) \\ \vdots \\ h_{2N}(x_N) - H_N\hat{h}_{2N}(\hat{x}_N) \end{bmatrix}^T X(\mu_1^jX_1,\dots,\mu_N^j X_N)  \begin{bmatrix} W\begin{bmatrix}w_1 \\ \vdots \\ w_N \end{bmatrix} - \hat{W}\begin{bmatrix}\hat{w}_1 \\ \vdots \\ \hat{w}_N \end{bmatrix} \\ h_{1}(x_1) - H_1\hat{h}_{21}(\hat{x}_1) \\ \vdots \\ h_{2N}(x_N) - H_N\hat{h}_{2N}(\hat{x}_N) \end{bmatrix} \nonumber\\
&\leq  \sum^N_{i=1} -\mu_i^j \kappa_i\mathcal{S}_i(x_i,\hat{x}_i) + \sum^N_{i=1}\mu_i^j\psi_{i\mathrm{ext}}(\norm{\hat{u}_i}^2)\nonumber \\
& \quad + \begin{bmatrix} h_{21}(x_1) - H_1\hat{h}_{21}(\hat{x}_1) \\ \vdots \\ h_{2N}(x_N) - H_N\hat{h}_{2N}(\hat{x}_N) \end{bmatrix}^T \begin{bmatrix}WM_j \\ I_{\tilde{q}} \end{bmatrix}^T X(\mu_1^jX_1,\dots,\mu_N^j X_N) \begin{bmatrix}WM_j \\ I_{\tilde{q}} \end{bmatrix}  \begin{bmatrix} h_{21}(x_1) - H_1\hat{h}_{21}(\hat{x}_1) \\ \vdots \\ h_{2N}(x_N) - H_N\hat{h}_{2N}(\hat{x}_N) \end{bmatrix} \nonumber\\
&\leq -\kappa V(x,\hat{x},j) + \psi_{\mathrm{ext}}(\norm{\hat{u}}^2).
\end{align} \nonumber 
\normalsize
\horizontalEqSep
\vspace{-1cm}
\end{figure*}
By defining $\alpha = \underline{\alpha}^{-1}$ which is a convex $\mathcal{K}_\infty$ function, one obtains
$$\alpha(\norm{h_{1}(x) - \hat{h}_1(\hat{x})}^2) \leq V(x,\hat{x},j),$$
satisfying inequality \eqref{ineq:simfunction1}. 
Now we show inequality \eqref{ineq:simfunction2}. Consider any $x=[x_1;\dots;x_N] \in \Real^n, \hat{x} = [\hat{x}_1;\dots;\hat{x}_N] \in \Real^{\hat{n}},$ $\hat{u} = [\hat{u}_1;\dots;\hat{u}_N] \in \Real^{\hat{m}}$ and $j \in \modes$. For any $i \in [1;N]$, there exists $u_i \in \Real^{m_i}$, consequently, a vector $u = [u_1;\ldots;u_N] \in \Real^m$, satisfying  \eqref{ineq:defDiss} for each pair of subsystems $\Sigma_i$ and $\hat{\Sigma}_i$ with the internal inputs given by $[w_1;\dots;w_N] = M_j[h_{21}(x_1);\dots;h_{2N}(x_N)]$ and $[\hat{w}_1;\dots;\hat{w}_N] = \hat{M}_j[\hat{h}_{21}(\hat{x}_1);\dots;\hat{h}_{2N}(\hat{x}_N)]$. We consider the infinitesimal generator of function $V$ and employ conditions \eqref{eq:interconnected1} and \eqref{eq:interconnected2} which result in the chain of inequalities \eqref{ineq:inf_storage}. In \eqref{ineq:inf_storage} the constant $\kappa = \min\limits_{i \in [1;N]} \kappa_i$ and the function $\psi_{\mathrm{ext}} \in \mathcal{K}_{\infty} \cup \{0\}$ is defined as the following. Consider $\mathcal{K}_\infty\cup \{0\}$ functions
\normalsize
$$\psi_{\mathrm{ext}}^j(s) \coloneqq \begin{cases} \max\limits_{\vec{s} \geq 0} & \sum^N_{i=1} \mu_i^j\psi_{i\mathrm{ext}}(s_i) \\ \mbox{s.t.} & \norm{\vec{s}} \leq s.\end{cases}$$
Let us recall that by assumption functions $\psi_{i\mathrm{ext}} \   \forall i \in [1;N]$ are concave functions. Thus, function $\psi_{\mathrm{ext}}^j$ above defines a {\it perturbation function} which is a concave one; see \cite{boyd2004convex} for further details. Since $\psi_{\mathrm{ext}}^j\in\mathcal{K}_\infty\cup \{0\}$ are concave functions, there exists a concave function $\psi_{\mathrm{ext}}\in\mathcal{K}_\infty\cup \{0\}$ such that $\psi_{\mathrm{ext}}^j\leq\psi_{\mathrm{ext}}$ $\forall j\in \modes$. 
Hence, we conclude that $V$ is an SSF-M$_2$ function from $\hat{\Sigma}$ to $\Sigma$.
\end{proof}

\section{Example}
Consider an interconnection of $N \in \Natural$ stochastic hybrid subsystems $\Sigma_i$, $i \in [1;N]$, where each $\Sigma_i$ is given by $\Sigma_i = (\Real^{n_i},\Real^{n_i},\Real^{n_i},\mathcal{U}_i,\mathcal{W}_i,f_i,\sigma_i,\rho_i,\Real^{q_{1i}},\Real^{n_i},h_{1i},h_{2i})$, where for every $x_i \in \Real^{n_i}$, $u_i \in \Real^{n_i}$, $w_i \in \Real^{n_i}$:
\begin{align}
f_i(x_i,u_i,w_i) &= w_i + u_i, \nonumber \\
\sigma(x_i) &= \varpi x_i, \nonumber \\
\rho(x_i) &= \tau x_i, \nonumber \\
h_{1i}(x_i) &= C_{1i}x_i, \nonumber \\
h_{2i}(x_i) &= x_i,
\end{align}
where $\varpi \in \Real_{>0}$, $\tau \in \Real_{>0}$, and $C_{1i} \in \Real^{q_{1i}\times n_i}$. Each $\Sigma_i$ satisfies
\begin{IEEEeqnarray*}{c}
	\Sigma_i:\left\{
	\begin{IEEEeqnarraybox}[\relax][c]{rCl}
		\diff \xi_i(t) &=& (\omega_i(t) + \upsilon_i(t))\diff t + \varpi  \xi_i(t)\diff W_t + \tau \xi_i(t)\diff P_t\nonumber,\\
		\zeta_{1i}(t)&=& C_{1i}\xi_i(t), \\%
		\zeta_{2i}(t) &=& \xi_i(t).
	\end{IEEEeqnarraybox}\right.
\end{IEEEeqnarray*}
 Assume the rate of the Poisson process $P_t$ is $\lambda$. We consider a set of two interconnection topologies $M = \{M_1, M_2\}$ given by:
\begin{align}
M_1 &= -\frac{1}{n}\begin{bmatrix}
n - 1 & -1 & \dots & \dots & -1 \\
-1 & n-1 & -1 & \dots & -1 \\
-1 & -1 & n-1 & \dots & -1 \\
\vdots &  & \ddots & \ddots & \vdots \\
-1 & \dots & \dots & -1 & n-1
\end{bmatrix}, \nonumber \\
M_2 &= \begin{bmatrix}
-2 & 1 & 0 & 0 & \dots & 1 \\
1 & -2 & 1 & 0 & \dots & 0 \\
0 & 1 & -2 & 1 & \dots & 0 \\
\vdots & & & \ddots & & \\ 
& & & & \ddots & \\
1 & 0 & 0 & \dots & 1 & -2 \\
\end{bmatrix},
\end{align} 
where $n = \sum\limits_{i=1}^N n_i$, $M_1 \in \Real^{n \times n}$ represents a fully-connected interconnection topology, while $M_2 \in \Real^{n \times n}$ represents a cyclic interconnection topology. 
We consider a Markov chain $\Pi = (\modes,\mathsf{Q})$, with $\modes = \{1,2\}$ and $$\mathsf{Q} = \begin{bmatrix}
-0.5 & 0.5 \\
0.5 & -0.5
\end{bmatrix},$$
with the switching process $\pi$, which governs the switching between matrices $M_1$ and $M_2$. 
The interconnected switching stochastic hybrid system is denoted by $\mathcal{I}_{\pi}^M(\Sigma_1,\dots,\Sigma_N)$. We consider a scalar abstraction $\hat{\Sigma}_i = (\Real,\Real,\Real,\mathcal{\hat{U}}_i,\mathcal{\hat{W}}_i,\hat{f}_i,\hat{\sigma}_i,\hat{\rho}_i,\Real^{{q}_{1i}},\Real,\hat{h}_{1i},\hat{h}_{2i})$, where for every $\hat{x}_i \in \Real$, $\hat{u}_i \in \Real$, $\hat{w}_i \in \Real$:
\begin{align}
\hat{f}_i(\hat{x}_i,\hat{u}_i,\hat{w}_i) &= \hat{w}_i + \hat{u}_i, \nonumber \\
\hat{\sigma}(\hat{x}_i) &= {\varpi}\hat{x}_i , \nonumber \\
\hat{\rho}(\hat{x}_i) &= {\tau}\hat{x}_i , \nonumber \\
\hat{h}_{1i}(\hat{x}_i) &= C_{1i}\vec{1}_{n_i}\hat{x}_i, \nonumber \\
\hat{h}_{2i}(\hat{x}_i) &= \hat{x}_i.
\end{align}
The function  $\mathcal{S}_i(x_i, \hat x_i) = (x_i-\vec{1}_{n_i}\hat{x}_i)^T(x_i-\vec{1}_{n_i}\hat{x}_i)$ is an SStF-M$_2$ from $\hat{\Sigma}_i$ to $\Sigma_i$, $\forall i \in [1;N]$, with the following parameters:
\begin{align}
& {\kappa}_i = 2\tilde{\chi} - 2\lambda\tau - \varpi^2 - \lambda\tau^2, W_i = I_{n_i}, X^{11}_i = 0_{n_i},\\ &  X^{22}_i= 0_{n_i}, X^{12}_i = X^{21}_i = I_{n_i},  H_i = \hat{W}_i = \vec{1}_{n_i},
\end{align}
for some $\tilde{\chi} > \lambda\tau + \frac{\varpi^2}{2} + \frac{\lambda\tau^2}{2}$, and with $\alpha_i(r) = \frac{1}{\lambda_{\max}(C_{1i}^TC_{1i})}r$, and $\psi_{i\mathsf{ext}}(r) = 0$, $\forall r \in \Real_{\geq 0}$. Input $u_i \in \Real^{n_i}$ is given via the so-called interface function \begin{align}
\label{eq:int1}
u_i = -\tilde{\chi}(x_i - \vec{1}_{n_i}\hat{x}_i) + \vec{1}_{n_i}\hat{u}_i. 
\end{align}
By selecting $\mu^j_1 = \ldots = \mu^j_N = 1$ for any $j \in \{1,2\}$, the function $V(x,\hat{x},j) = \sum^N_{i=1}\mu_i^j \mathcal{S}_i(x_i,\hat{x}_i)$ is an $\ssft$ function from $\hat{\Sigma}$ to $\Sigma$, where $\hat{\Sigma} = \mathcal{I}_{\pi}^{\hat{M}}(\hat{\Sigma}_1,\dots,\hat{\Sigma}_N)$ is the interconnection of abstract subsystems with a set of interconnection matrices $\hat{M} = \{\hat{M}_1,\hat{M}_2\}$, satisfying conditions \eqref{eq:interconnected1} and  \eqref{eq:interconnected2} for each $j \in \modes$. In this example, we choose $N$ = 3, $n_i = 50$, $C_{1i} = [1~0~\dots~0]$, $\forall i \in [1;N]$, $\varpi = 0.3$, $\tau = 0.03$, $\tilde{\chi} = 1,$ and $\lambda = 10$. The interconnection matrices for $\hat{\Sigma}$ are computed as:
\begin{align}
\hat{M}_1 &= 
\frac{1}{150}
\begin{bmatrix}
-100 & 50 & 50 \\
50 & -100 & 50 \\
50 & 50 & -100,
\end{bmatrix} \\
\hat{M}_2 &= \begin{bmatrix}
-2 & 1 & 1 \\
1 & -2 & 1 \\
1 & 1 & -2
\end{bmatrix}.
\end{align}
\subsection{Controller synthesis}

\begin{figure}
\hspace{-0.5cm}
	\centering
	\includegraphics[scale=0.5]{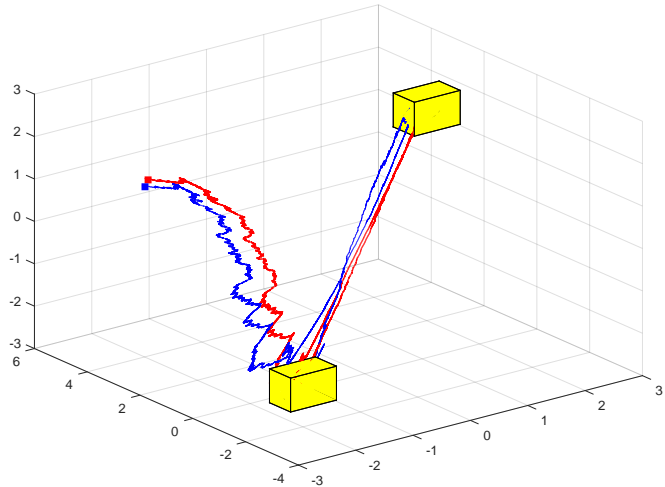}
	\caption{A realization of the output trajectories of the concrete (blue) and abstract (red) interconnected  stochastic hybrid systems with switched topologies ($\zeta(t)$ and $\hat{\zeta}(t)$ respectively). The yellow boxes indicate the two targets $T_1$ and $T_2$. The start points of the trajectories are indicated by the markers.}
	\label{fig:fig1}
\end{figure}

\begin{figure}
	\hspace{-0.5cm}
	\centering
	\includegraphics[scale=0.6]{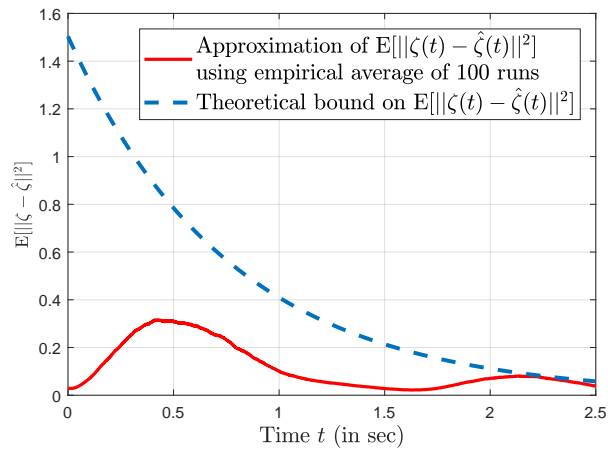}
	\caption{An approximation of $\Expectation{\norm{\zeta(t) - \hat{\zeta}(t)}^2}$ using empirical average of 100 runs and the theoretical upper bound obtained from \eqref{eq:bound_output} with $\beta(r,t) = r \expo^{-1.3t}$ and $\gamma_{\textrm{ext}}(r) = 0$.} 
	\label{fig:fig2}
\end{figure}

In this sub-section we synthesize a controller for the abstract interconnected switching stochastic hybrid system $\hat{\Sigma} = \mathcal{I}_{\pi}^{\hat{M}}(\hat{\Sigma}_1,\dots,\hat{\Sigma}_N)$ to enforce a given specification, and then refine the designed controller for the interconnected switching stochastic hybrid system ${\Sigma} = \mathcal{I}_{\pi}^M({\Sigma}_1,\dots,{\Sigma}_N)$. First, we consider the randomly switched interconnected system $\tilde{\Sigma} = \mathcal{I}_{\pi}^{\hat{M}}(\tilde{\Sigma}_1,\dots,\tilde{\Sigma}_N)$, wherein each $\tilde{\Sigma}_i$, $i \in [1;N]$, results from $\hat{\Sigma}_i$ by setting the drift and reset terms to zero. 
It can be shown that function $V(\hat{x}, \tilde{x}) = \hat{x}^TQ_1\hat{x} + \tilde{x}^TQ_2\tilde{x}$, where $\hat{x} = [\hat{x}_1;\dots;\hat{x}_N]$ and $\tilde{x} = [\tilde{x}_1;\dots;\tilde{x}_N]$, is an $\ssft$ from $\tilde{\Sigma}$ to $\hat{\Sigma}$ with the associated interface function
\begin{align}
\label{eq:int2}
\hat{u} &= -\tilde{\chi}(\hat{x}- \tilde{x}) + \tilde{u},
\end{align}
where $\hat{u} = [\hat{u}_1;\dots;\hat{u}_N]$ and $\tilde{u} = [\tilde{u}_1;\dots;\tilde{u}_N]$ and the matrices
$$Q_1 = \begin{bmatrix} 0.0708 & 0.0031 & 0.0031 \\
0.0031 & 0.0708 & 0.0031 \\
0.0031 & 0.0031 & 0.0708\end{bmatrix},$$
$$Q_2 = \begin{bmatrix} 2.9264 & -1.4392 & -1.4392 \\
-1.4392 & 2.9264 & -1.4392 \\
-1.4392 & -1.4392 & 2.9264\end{bmatrix},$$
which are obtained by solving an appropriate LMI. We synthesize a controller using toolbox \texttt{SCOTS} \cite{rungger2016scots} to synthesize a  controller to enforce the following linear temporal logic specification \cite{katoen08} over the outputs of $\tilde{\Sigma}$:
\begin{align}
\Psi = \square S \wedge \square \lozenge T_1 \wedge \square \lozenge T_2.
\end{align}
The property $\Psi$ can be interpreted as follows: the output trajectory $\tilde{\zeta}$ of the closed loop system evolves inside the set $S$, and visits $T_i$, $i \in [1;2]$, infinitely often, indicated with yellow boxes in Figure \ref{fig:fig1}. The sets $S$, $T_1$, and $T_2$ are given by: $S$ = $[-5,5]^3,$ $T_1 = [1.6,2.4]^3$, and $T_2 = [-2.4,-1.6]^3$. We use \eqref{eq:int1}
and \eqref{eq:int2} to generate the corresponding input enforcing this specification over original $\Sigma$. Figure \ref{fig:fig1} shows a realization of output trajectories $\Sigma$ and $\hat\Sigma$ started from initial conditions $ \zeta(0)=[-1.99;4.00;1.00]$ and $\hat{\zeta}(0)=[-1.89;4.10;1.10],$ respectively. 

\bibliographystyle{ieeetran}
\bibliography{bib}
\end{document}